\documentclass[12pt]{amsart}
\usepackage{amssymb,latexsym,amsmath,amsthm,enumitem,mathrsfs,geometry}
\usepackage{fullpage}
\usepackage{fancyhdr}
\usepackage{xcolor}
\usepackage{bbm}
\usepackage{stmaryrd}
\usepackage{mathrsfs}
\usepackage{hyperref}
\usepackage{lineno}
\usepackage{diagbox}
\usepackage{array}
\usepackage{tikz}
\usetikzlibrary{arrows}
\usetikzlibrary{positioning}
\usepackage{float}
\usepackage{comment}
\usepackage{mathtools}
\usepackage{cleveref}
\usepackage{graphicx}
\usepackage{subcaption}

\newtheorem{theorem}{Theorem}[section]
\newtheorem*{theorem*}{Theorem}

\newtheorem*{remark*}{Remark}

\newtheorem{definition}{Definition}[section]

\begin{document}

\numberwithin{equation}{section}

\title{Lie symmetries of a generalized Fisher equation in cylindrical coordinates}

\author[Bayaraa]{Bayarjargal Batsukh}
\address{Department of Mathematics, National University of Mongolia, Ulaanbaatar, Mongolia}
\email{bayarjargalb@num.edu.mn}
\author[uuganaa]{Uuganbayar Zunderiya}
\address{Department of Mathematics, National University of Mongolia, Ulaanbaatar, Mongolia}
\email{zunderiya@gmail.com}

\keywords{Fisher equations, Lie symmetries, Invariant solutions}
\thanks{}


\begin{abstract} In this work we studied a generalized Fisher equation in cylindrical coordinate using Lie symmetry method. We have determined for what type of source function the generalized Fisher equation has Lie Symmetries other than time translation symmetry when the diffusion function is given by an exponential function. Also the reduced ordinary differential equations are obtained corresponding to Lie symmetries of the generalized Fisher equation.
\end{abstract}

\maketitle

\section{Introduction}


In the last five decades, there appears a new trend using Lie point symmetry method to exploit the invariance property of differential equations. One can find more information on this method in \cite{olver,bluman, ibra}. In this work we use Lie symmetry method to study a generalization of the Fisher equation in cylindrical coordinate.

One major model of reaction diffusion nonlinear phenomena is the Fisher equation, which in its simplest form is given by $u_t-u_{x x}=u(1-u),$ here $u(x,t)$ is a concentration of an underlying object depending upon the nature of the model \cite{Ablowitz, Britton}. There are many generalizations along with modifications of Fisher equation according to its various applications. For example, $u_t = u_{x x} -u(1-u^2)$ and $u_t = u_{x x} +u^2(1-u)$ describing neural models \cite{fitz, bokh1, huxley, Nagumo} and $u_t = u_{x x} +a u^3 +b u^2+c u$, $a,b,c\in\mathbb{R}$ describing population genetics etc., depending on values of parameters \cite{bokh1, murr}.

In this work we study one of the generalizations of Fisher equation given by $u_t=f(u)+(g(u)u_x)_x,$ which is in cylindrical coordinates
\begin{equation}\label{fisher}
u_t=f(u)+\frac{1}{x}(x g(u) u_x)_x,
\end{equation}
where $f(u),$ $g(u)$ are sufficiently smooth functions. The analysis of the generalized Fisher equation using Lie symmetry method is studied in \cite{Ovsiannikov, Cherniha1, Cherniha2, Cherniha3, Moitsheki}. The Fisher equation in cylindrical coordinates (\ref{fisher}) is studied for some combinations of $f(u)$ and $g(u)$ when $f(u)$ and $g(u)$ equals to $u^m$ or $e^{n u}$ \cite{bokh1, Bokh2, Rosa1}.

For an arbitrary $f(u)$ and $g(u)$ the equation (\ref{fisher}) has time translation symmetry \cite{Rosa1}. In section 2, We show that there are only three types of function $f(u)$ for given $g(u)=k_1 e^{k_2 u}$ so that the equation (\ref{fisher}) has Lie symmetry other than the time translation symmetry. As expected these three types of function $f(u)$ include the earlier studied case of function $f(u)$ in \cite{Rosa1}. Moreover we showed that there is an additional Lie symmetry of the equation (\ref{fisher}). Lastly, in section 3 through the symmetry reduction, we transform the generalized Fisher equation into ordinary differential equations with new independent variables.

\section{Lie symmetries of the generalized Fisher equation}
Here we study the equation (\ref{fisher}) with $f(u)\neq 0,$ $g'(u)\neq 0.$ The infinitesimal vector field is given
$X=\xi(t,x,u)\frac{\partial}{\partial x}+\tau(t,x,u)\frac{\partial}{\partial t}+\eta(t,x,u)\frac{\partial}{\partial u}.$ The second prolongation of $X$ is $
X^{(2)}=X+\mu^x\frac{\partial}{\partial u_x}+\mu^t\frac{\partial}{\partial u_t}+\mu^{x x}\frac{\partial}{\partial u_{x x}}+\mu_{t t}\frac{\partial}{\partial u_{t t}},$ where
\begin{equation}\label{2}
\mu^x=D_x(\mu)-u_x D_x(\xi)-u_t D_x(\tau), \mu^t=D_t(\mu)-u_x D_t(\xi)-u_t D_t(\tau),\mu^{x x}=D_{x}(\mu^x)-u_{x x} D_x(\xi)-u_{x t} D_x(\tau),
\end{equation}
here $D_x,$ $D_t$ are total derivatives defined as
$$
D_x=\partial_x+u_x\partial_u+u_{t x}\partial_{u_t}+u_{x x}\partial_{u_x}+\cdots,\quad D_t=\partial_t+u_t\partial_u+u_{t t}\partial_{u_t}+u_{x t}\partial_{u_x}+\cdots.
$$
If $X$ is the Lie symmetry of (\ref{fisher}) then
\begin{equation}\label{5}
X^{(2)}(u_t-f(u)-\frac{1}{x}(xg(u)u_x)_x)|_{u_t=f(u)-\frac{1}{x}(xg(u)u_x)_x}=0.
\end{equation}
Substituting (\ref{2}) into (\ref{5}) and equating the coefficients of various partial derivatives of $u$, explicitly the coefficients of $u_{x t},$ $u_x u_{x t},$ $u_x u_{x x},$ $u_{x x},$ $u_x^2,$ $u_x$ and $1$ to $0$, we get the following determining system of equations:
\begin{eqnarray}
& \tau_x = \tau_u=\xi_u=0,\label{6}\\
& 2g\xi_x-g_u\eta-g\tau_t =  0,\label{9}\\
& x^2(2g_u\xi_x-g\eta_{u u}-g_u\eta_u-g_{u u}\eta-g_u\tau_t)=0,\label{101}\\
& -x g_u\eta-x g\tau_t -2 x^2 g_u \eta_x+x g \xi_x+g \xi-x^2\xi_t-2x^2 g \eta_{u x}+x^2 g \xi_{x x}=0,\label{10}\\
& x\eta_t-x f_u\eta+x f\eta_u-x f\tau_t-x g \eta_{x x}-g \eta_x=0.\label{11}
\end{eqnarray}
From (\ref{6}) it follows that $\tau=\tau(t),$ $\xi=\xi(x,t).$  Differentiating (\ref{9}) with respect to $u$ we get
\begin{equation}\label{0916od}
2g_u\xi_x-g_{u u}\eta-g_u \eta_u-g_u\tau_t=0.
\end{equation}
Multiplying (\ref{0916od}) by $x^2$ and then subtracting it from (\ref{101}) we obtain $-x^2g\eta_{uu}=0.$ Since $g\neq 0$, we get $\eta_{uu}=0$ and $\mu=0$. Due to assumption of $g_u\neq 0$ from (\ref{9}) we get
\begin{equation}\label{12}
\eta=(2\xi_x-\tau_t)\frac{g}{g_u}, 
 \end{equation}
 and $\eta_{uu}=(2\xi_x-\tau_t)\left(\frac {g}{g_u}\right)_{uu}=0.$
Suppose $2\xi_x-\tau_t=0.$ Then $\eta=0$ and from (\ref{11}) it follows $\tau_t=0$. So $\xi_x=0$, and substitute $\eta=0$, $\xi_x=0$ and $\tau_t=0$ into (\ref{10}), we obtain $g\xi=0$. Since $g\neq 0$, we have $\xi=0$. This shows $X=\frac{\partial}{\partial t}$ \cite{Rosa1}.

Hence if the equation (\ref{fisher}) has symmetry other than $X=\frac{\partial}{\partial t}$ then $\left(\frac{g}{g_u}\right)_{u u}=0.$ From this it follows $g/g_u=c_1 u+c_2,$ $c_1, c_2\in \mathbb{R}.$
In the case of $c_1 \neq 0$ the solution is $g(u)=k_1(u+k_2)^{k_3}$ with some constants $k_1, k_2, k_3\in\mathbb{R}$ and if $c_1=0$ then the solution is $g(u)=k_1 e^{k_2 u}.$

From above it is possible that the equation (\ref{fisher}) has a symmetry other than $X=\frac{\partial}{\partial t}$ when only $g(u)=k_1(u+k_2)^{k_3}$ or $g(u)=k_1 e^{k_2 u}.$ The function $g(u)=k_1 (u+k_2)^{k_3}$ had been studied in \cite{bokh1, Bokh2,Rosa1}. So we will prove the following theorem corresponding to the partially studied function $g(u)=k_1 e^{k_2 u}.$
\begin{theorem}\label{thm1}
Let $g(u)=k_1 e^{k_2 u}$ with $k_1\neq 0, k_2\neq 0.$ Then  the equation (\ref{fisher}) has Lie symmetry other than $X=\frac{\partial}{\partial t}$ if and only if the function $f(u)$ has the following three types:
\begin{itemize}
\item[a)] If $f(u)=k_3 e^{k_4 u},$ $k_3, k_4\neq 0$ then the equation (\ref{fisher}) has symmetries
$$X_1 = \frac{\partial}{\partial t},\quad 
X_2 = (k_4-k_2)x\frac{\partial}{\partial x}+2 k_4 t\frac{\partial}{\partial t}-2\frac{\partial}{\partial u}.
$$
\item[b)] If $f(u)=k_3 e^{k_2 u}+k_5,$ $k_3, k_5\neq 0$ then the equation (\ref{fisher}) has symmetries
$$
X_1=\frac{\partial}{\partial t},\quad X_2 = e^{-k_2 k_5 t}\frac{\partial}{\partial t}+k_5 e^{-k_2 k_5 t}\frac{\partial}{\partial u}.
$$
\item[c)] If $f(u)=k_5,$ $k_5\neq 0$ then the equation (\ref{fisher}) has symmetries
$$
X_1 = \frac{\partial}{\partial t},\quad
X_2 = e^{-k_2 k_5 t}\frac{\partial}{\partial t}+k_5 e^{-k_2 k_5 t}\frac{\partial}{\partial u},\quad
X_3 = k_2 x\frac{\partial}{\partial x}+2\frac{\partial}{\partial u}.
$$
\end{itemize}
\end{theorem}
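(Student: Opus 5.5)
With $g(u)=k_1e^{k_2u}$ we have $g/g_u=1/k_2$, so (\ref{12}) gives $\eta=(2\xi_x-\tau_t)/k_2$, a function of $(x,t)$ only. In particular $\eta_u=\eta_{ux}=0$ and $g_u\eta=g(2\xi_x-\tau_t)$. The plan is to substitute this into the two remaining determining equations (\ref{10}) and (\ref{11}). In each, I will separate the terms multiplied by $g(u)=k_1e^{k_2u}$ (or by $f$, $f_u$) from the terms that do not depend on $u$. I will use that $\xi=\xi(x,t)$ and $\tau=\tau(t)$ by (\ref{6}).

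\textbf{Step 1: equation (\ref{10}).} The only $u$-free term is $-x^2\xi_t$, so $\xi_t=0$ and $\xi=\xi(x)$. Using $k_2\eta_x=2\xi''$, the coefficient of $g$ reduces, after the $\tau_t$ terms cancel, to the Euler equation
\[
3x^2\xi''+x\xi'-\xi=0 .
\]
Its indicial equation is $3r^2-2r-1=0$, with roots $r=1$ and $r=-\tfrac13$. Hence $\xi=ax+bx^{-1/3}$ with constants $a,b$.

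\textbf{Step 2: equation (\ref{11}).} Divide by $x$ and substitute $\eta_t=-\tau_{tt}/k_2$, $\eta_x=2\xi''/k_2$, $\eta_{xx}=2\xi'''/k_2$. The $g$-term becomes $-\tfrac{2}{k_2}g\,(\xi'''+\xi''/x)$. For the $x^{-1/3}$ part this equals a nonzero multiple of $b\,g\,x^{-10/3}$, while every other term depends only on $(t,u)$. Hence $b=0$, so $\xi=ax$ and $\eta=(2a-\tau_t)/k_2$. What remains is the classifying relation
\[
\tau_{tt}+(2a-\tau_t)f'(u)+k_2\tau_t f(u)=0\qquad\text{for all }t,u. \tag{$*$}
\]

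\textbf{Step 3: case analysis of $(*)$ (the main obstacle).} The task is to show that a nontrivial solution $(a,\tau)$, i.e.\ one other than $a=0$, $\tau=$ const, forces $f$ into one of the three forms, and then to read off the generators. I will regard $(*)$ as a linear relation among the functions $1$, $f'$, $f$ of $u$, with coefficients $\tau_{tt}$, $2a-\tau_t$, $k_2\tau_t$ depending on $t$.

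If $1,f,f'$ are linearly independent, all three coefficients vanish. Then $\tau_t=0$ and $a=0$, which gives only $X_1$.

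Otherwise $f$ satisfies a first-order linear ODE $\alpha f'+\beta f+\gamma=0$. This yields $f=k_3e^{k_4u}+k_5$ or $f$ affine in $u$. Substituting back, I will match coefficients of the independent functions in each subcase.

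\begin{itemize}
\item $f=k_3e^{k_4u}$ with $k_4\neq k_2$ forces $\tau_{tt}=0$ and $k_4(2a-\tau_t)+k_2\tau_t=0$. This gives $\tau$ linear in $t$ and, after scaling, $X_2$ of case a).
\item $f=k_3e^{k_2u}+k_5$ with $k_5\ne0$ forces $a=0$ and $\tau_{tt}+k_2k_5\tau_t=0$, so $\tau=e^{-k_2k_5t}$ and $\eta=k_5e^{-k_2k_5t}$, which is case b).
\item If $k_3=0$, the $e^{k_2u}$ constraint disappears, so $a$ is free. This yields the extra generator $k_2x\partial_x+2\partial_u$ of case c).
\item The affine case $f=pu+q$ with $p\ne0$ must be checked to give nothing new. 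Here the coefficient of $u$ forces $\tau_t=0$ and then $a=0$.
\end{itemize}

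The bookkeeping of degenerate subcases (for example $k_4=k_2$ or $k_5=0$ within the exponential family) is where care is needed. The converse direction, that each listed $X_i$ is a symmetry, follows by direct substitution into $(*)$.
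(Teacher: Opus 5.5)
Your Steps 1 and 2 match the paper's computation; the paper eliminates $c_2$ through the coefficient of $x^{-7/3}$ before dividing by $x$. There is one small slip. After dividing by $x$, the term $-f'(u)(2\xi'-\tau_t)/k_2$ still contains $\frac{2b}{3k_2}f'(u)\,x^{-4/3}$, so not every other term is $x$-free. The conclusion $b=0$ still holds, because $x^{-10/3}$ is independent of $1$ and $x^{-4/3}$.

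Step 3 is where you depart from the paper. The paper differentiates (\ref{17}) in $u$ and in $t$ to get $(k_2f_u-f_{uu})\tau_{tt}=0$, and then needs only two cases:
\begin{itemize}
\item Either $f=k_3e^{k_2u}+k_5$. Then $\tau_{tt}+k_2k_5\tau_t=-2k_2k_3c_1e^{k_2u}$ yields cases b), c), and a) with $k_4=k_2$.
\item Or $\tau_{tt}=0$. Then (\ref{17}) becomes the \emph{homogeneous} relation $k_2c_3f=(c_3-2c_1)f_u$. This forces a pure exponential $f=k_3e^{k_4u}$ with no additive constant, and at the same time rules out affine and constant $f$.
\end{itemize}

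Your linear-dependence argument on $\{1,f,f'\}$ is the standard group-classification device. It has the advantage of explaining up front why only $k_3e^{k_4u}+k_5$ or affine $f$ can be exceptional. It costs more subcases, however, and your list omits the generic one: $f=k_3e^{k_4u}+k_5$ with $k_4\neq k_2$ and $k_3k_5\neq0$. This subcase is needed for the ``only if'' direction, but it is quick. Matching the coefficients of $1$ and $e^{k_4u}$ in $(*)$ gives $\tau_{tt}+k_2k_5\tau_t=0$ and $(k_2-k_4)\tau_t+2ak_4=0$. The second equation makes $\tau_t$ constant, so $\tau_{tt}=0$. Then $\tau_t=0$ because $k_2k_5\neq0$, and $a=0$ because $k_4\neq0$.

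You should also list $f=k_3e^{k_2u}$ explicitly. It gives $a=0$ and $\tau_{tt}=0$, which is case a) with $k_4=k_2$. With these two subcases written out, your argument is complete, and it is a valid alternative to the paper's two-case split.
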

\begin{remark*}
The case a) of the theorem $g(u)=k_1e^{k_2 u},$ $f(u)=k_3e^{k_4 u}$ is the case 8 of Table 1 of \cite{Rosa1} when $k_2=k_4.$ But $k_2\neq k_4$ is not included in \cite{Rosa1}. The symmetry group of the equation (\ref{fisher}) for the case a) of the theorem corresponding to $X_2 = (k_4-k_2)x\frac{\partial}{\partial x}+2 k_4 t\frac{\partial}{\partial t}-2\frac{\partial}{\partial u}$ is $G_2 : (x,t,u)\to (e^{(k_4-k_2)\epsilon}x, e^{2k_4\epsilon}t,u-2\epsilon).$ It means that if $u=f(x,t)$ is a solution of (\ref{fisher}), then so is the function $u^{(2)} = f(e^{(k_2-k_4)\epsilon}x,e^{-2k_4\epsilon}t)-2\epsilon.$
\end{remark*}
\begin{proof}
In order to prove we solve the system of equations (\ref{10}) and (\ref{11}) considering (\ref{12}) and $g(u)=k_1 e^{k_2 u}$ for unknown functions $\tau(t),$ $\xi(x,t),$ $\eta(x,t,u)$ and $f(u).$ Substituting $g(u)=k_1 e^{k_2 u}$ and (\ref{12}) into (\ref{10}) we get $-x e^{k_2 u}\xi_x-3x^2 e^{k_2 u}\xi_{x x}+e^{k_2 u}\xi-\frac{x^2\xi_t}{k_1}=0.$
From here it follows $\xi_t=0$ and $-x\xi_{x}-3x^2 \xi_{x x}+\xi=0.$
So $\xi=\xi(x)$ and the solution is $\xi(x)=c_1 x+c_2 x^{-\frac{1}{3}}.$ Substituting (\ref{12}) and obtained $\xi(x)$ into (\ref{11}) we get
\begin{equation}
-x\tau_{t t}+\frac{2}{3}f_u c_2 x^{-\frac{1}{3}}-2 x f_u c_1+
x f_u \tau_t-x k_2 f \tau_t+\frac{32}{27}k_1 e^{k_2 u} c_2 x^{-\frac{7}{3}}=0.\label{16}
\end{equation}
The coefficient of $x^{-\frac{7}{3}}$ is zero in (\ref{16}), consequently $c_2=0.$ Eq.(\ref{16}) becomes
\begin{equation}\label{17}
\tau_{t t}+(k_2 f-f_u)\tau_t=-2 f_u c_1.
\end{equation}
If we take a derivative by $u$ and $t$ of (\ref{17}), we get $(k_2 f_u-f_{u u})\tau_{t t}=0.$ Hence there follow two subcases:$k_2 f_u -f_{u u}=0 \mbox{ or } k_2 f_u-f_{u u}\neq 0.$\\
Case 1. Let $k_2 f_u-f_{u u}=0.$ Thus $f(u)=k_3 e^{k_2 u}+k_5$ and (\ref{17}) becomes
\begin{equation}\label{odn1}
\tau_{t t}+k_2 k_5 \tau_t=-2 k_2 k_3 c_1 e^{k_2 u}.
\end{equation}
We here divide into two cases by $k_3$.
\begin{enumerate}
\item[1.1)] Let $k_3=0.$ Hence $k_5\neq 0$ and (\ref{odn1}) becomes
\begin{equation}\label{odn2}
\tau_{t t}+k_2 k_5\tau_t=0.
\end{equation}
Since $k_2 k_5\neq 0,$ the solution of (\ref{odn2}) is
$\tau=c_3 e^{-k_2 k_5 t}+c_4$ and by (\ref{12})
$\eta=(2c_1+c_3 k_2 k_5 e^{-k_2 k_5 t})/k_2,$
which corresponds to the case c) of Theorem 1.
\item[1.2)] Let $k_3\neq 0.$ Then $c_1=0$ and (\ref{odn1}) becomes
\begin{equation}\label{odn3}
\tau_{t t}+k_2 k_5\tau_t=0.
\end{equation}
The equation (\ref{odn3}) has only two possible solutions:
\begin{enumerate}
\item[1.2.1)] Let $k_5= 0.$ Then the solution (\ref{odn3}) is $\tau=c_3 t+c_4$ and again by (\ref{12}) $\eta=-\frac{c_3}{k_2},$ which corresponds to the case a) of Theorem 1, when $k_2=k_4.$
\item[1.2.2)] Let $k_5\neq 0.$ Then the solution of (\ref{odn3}) is $\tau=c_3 e^{-k_2 k_5 t}+c_4$ and $\eta=c_3 k_4e^{-k_2 k_4 t},$ which corresponds to the case b) of Theorem 1.
\end{enumerate}
\end{enumerate}
Case 2. Let $k_2 f_u-f_{u u}\neq 0.$ Then $\tau_{t t}=0.$ So $\tau(t)=c_3 t+c_4$ and (\ref{17}) becomes $(k_2 f-f_u)c_3=-2c_1 f_u.$
If $c_3=0$ then $c_1=0.$ Thus $\xi=\eta=0$ and $\tau=c_4.$ So we take $c_3\neq 0$ and obtain $f(u)=k_3 e^{k_4 u}$ with $k_3\neq 0,$ $k_2\neq k_4.$ By (\ref{17}) we have $(k_2/k_4 -1) c_3 = -2 c_1.$ 
This case corresponds to the case a) of the Theorem 1. The proof of the theorem finished here.
\end{proof}
We have successfully derived three types of $f(u)$ functions so that the equation (\ref{fisher}) has symmetry other than $X=\frac{\partial}{\partial t}$ and found Lie symmetries corresponding to each of three types of function $f(u).$

Here we would like to mention that Theorem \ref{thm1} has covered all the possibilities of function $f(u)$ of equation (\ref{fisher}) when $g(u)$ equals to $k_1 e^{k_2 u},$ compared to previously studied cases, in which were demonstrated only lists of particular cases of function $f(u)$ and $g(u)$ of equation (\ref{fisher}), so that equation (\ref{fisher}) has Lie symmetry other than $X=\frac{\partial}{\partial t}.$ Next, we find the invariant solutions of equation (\ref{fisher}). Indeed, there are cases that using Lie symmetry method is not sufficient to find the exact solution of differential equation. Even so, we can derive invariant solutions of equation (\ref{fisher}) under Lie symmetry transformation.

\section{Similarity reductions of generalized Fisher equation}
To do the reduction we first need to define an invariant solution of equation (\ref{fisher}) corresponding to symmetries found in section 2.
\begin{definition}
$u=u(x,t)$ is said to be an invariant solution of equation (\ref{fisher}) corresponding to the infinitesimal symmetry $X=\xi\frac{\partial}{\partial x}+\tau\frac{\partial}{\partial t}+\eta\frac{\partial}{\partial u}$ if and only if
\begin{itemize}
\item[i)] $u=u(x,t)$ satisfies the equation (\ref{fisher}).
\item[ii)] $u=u(x,t)$ is an invariant surface of $X,$ i.e. $\xi\frac{\partial u}{\partial x}+\tau\frac{\partial u}{\partial t}-\eta=0.$
\end{itemize}
\end{definition}
Now let us find similarity reductions and reduced equations corresponding to Theorem \ref{thm1}.
\subsection*{Case a) of Theorem \ref{thm1}}
In the case a) the equation (\ref{fisher}) is rewritten as 
\begin{equation}\label{redod}
u_t=k_3 e^{k_4 u}+\frac{1}{x}(xk_1 e^{k_2 u}   u_x)_x,\qquad k_1, k_2, k_3, k_4\neq 0.
\end{equation}
We choose the symmetry $X=(k_4-k_2)x\frac{\partial}{\partial x}+2k_4 t\frac{\partial}{\partial t}-2\frac{\partial}{\partial u}$ and will consider invariant solutions under $X.$ The characteristic equation corresponding to the symmetry $X$ is
$
\frac{d x}{(k_4-k_2)x}=\frac{d t}{2k_4 t}=\frac{d u}{-2}.
$
From here we found the similarity variable $z(x,t)=x^{2} t^{\frac{k_2}{k_4}-1}$ and the similarity transformation $u=-\frac{1}{k_4}\ln(t)+h_1(z),$ here $h_1(z)$ is a holomorphic function.  Substituting these similarity variable and transformation to the equation (\ref{fisher}) we get the desired ordinary differential equation.
\begin{theorem}
Invariant solution of equation (\ref{redod}) corresponding to the $X$ is
\begin{equation}\label{251}
u=\ln\left(\frac{h(x^{2} t^{\frac{k_2}{k_4}-1})^{\frac{1}{k_2}}}{t^{\frac{1}{k_4}}}\right),
\end{equation}
where $h(z)$ satisfies the following similarity reduction equation:
\begin{equation}\label{252}
4k_1 k_4 zh''(z)+\left(4k_1 k_4+(k_4-k_2)\frac{z}{h(z)}\right)h'(z)+k_2k_3k_4(h(z))^{\frac{k_4}{k_2}}+k_2=0.
\end{equation}
\end{theorem}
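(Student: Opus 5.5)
The plan is to derive the ansatz from the invariant surface condition for $X=(k_4-k_2)x\frac{\partial}{\partial x}+2k_4 t\frac{\partial}{\partial t}-2\frac{\partial}{\partial u}$ in case a) of Theorem \ref{thm1}, and then to substitute it directly into (\ref{redod}). Condition ii) of the Definition reads $(k_4-k_2)x u_x+2k_4 t u_t+2=0$. The characteristic system $\frac{dx}{(k_4-k_2)x}=\frac{dt}{2k_4t}=\frac{du}{-2}$ has the two first integrals
$$z=x^2t^{\frac{k_2}{k_4}-1},\qquad u+\tfrac{1}{k_4}\ln t .$$
Indeed $2k_4t\,\partial_t z=(2k_2-2k_4)z$ and $(k_4-k_2)x\,\partial_x z=2(k_4-k_2)z$, and these cancel. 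So the general invariant surface is $u=-\frac1{k_4}\ln t+h_1(z)$. Writing $h_1=\frac1{k_2}\ln h$ gives exactly the form (\ref{251}), since $\ln\bigl(h^{1/k_2}t^{-1/k_4}\bigr)=\frac1{k_2}\ln h-\frac1{k_4}\ln t$.

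Next, condition i) is imposed by computing each term of (\ref{redod}) in terms of $h$. The useful identity is $e^{k_2u}=t^{-k_2/k_4}h(z)$. From $z_t=(\frac{k_2}{k_4}-1)\frac zt$ and $z_x=\frac{2z}{x}$ we get
$$u_t=-\frac1{k_4t}+\frac{k_2-k_4}{k_2k_4}\frac{z h'}{t h},\qquad xk_1e^{k_2u}u_x=\frac{2k_1}{k_2}t^{-k_2/k_4}zh'.$$
Differentiating the second expression in $x$ and dividing by $x$ gives
$$\frac1x\bigl(xk_1e^{k_2u}u_x\bigr)_x=\frac{4k_1}{k_2}t^{-k_2/k_4}\frac{z}{x^2}(zh''+h')=\frac{4k_1}{k_2t}(zh''+h').$$
Here the relation $z/x^2=t^{k_2/k_4-1}$ exactly cancels the remaining power of $t$. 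The source term is $k_3e^{k_4u}=k_3t^{-1}h^{k_4/k_2}$.

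The key step, and the only real obstacle, is checking that every term carries the same factor $t^{-1}$, so that the PDE genuinely reduces to an ODE in $z$. This is guaranteed by the invariance under $X$, but it must be verified term by term, as in the computations above. Once it holds, we multiply the resulting identity by $k_2k_4t$ and collect terms, which gives
$$4k_1k_4zh''+\Bigl(4k_1k_4+(k_4-k_2)\frac{z}{h}\Bigr)h'+k_2k_3k_4h^{k_4/k_2}+k_2=0,$$
namely (\ref{252}).

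Conversely, every step is reversible wherever $h>0$, which is needed so that the logarithm in (\ref{251}) is defined. Hence any solution $h$ of (\ref{252}) yields an invariant solution via (\ref{251}). This completes the equivalence asserted in the theorem.
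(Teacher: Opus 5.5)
Your proposal is correct and takes the same route as the paper. You integrate the characteristic system of $X$ to get $z=x^2t^{k_2/k_4-1}$ and $u=-\frac{1}{k_4}\ln t+h_1(z)$, set $h_1=\frac{1}{k_2}\ln h$, and substitute into (\ref{redod}); your term-by-term computation (every term scaling like $t^{-1}$, then multiplying by $k_2k_4t$) reproduces (\ref{252}) exactly and fills in details the paper leaves implicit. The only addition is that you also need $t>0$ for $\ln t$ and $t^{k_2/k_4-1}$ to be defined, alongside $h>0$.
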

The equation (\ref{252}) is the reduced equation that we wanted to obtain. Once if we solve the reduced equation and substituting it back to (\ref{251}) we will obtain an exact solution of (\ref{redod}).\\
\textit{Example 1.} If we take $k_2=k_4=k$ in (\ref{252}) then we get the simpler form of reduced $\mbox{ODE}_{13}$ of \cite{Rosa1}
$$
4k_1zh''(z)+4k_1h'(z)+k k_3h(z)+1=0
$$
and its solution
$h(z)=c_1J\left(\sqrt{\frac{k k_3}{k_1}z}\right)+c_2Y\left(\sqrt{\frac{k k_3}{k_1}z}\right)-\frac{1}{k k_3},$ with $J$ and $Y$ Bessel functions of order 0 of first and second kind respectively.\\
\textit{Example 2.} If we take $k_1=k_3=1,$ $k_4=2k_2,$ then the reduced ordinary differential equation will be
\begin{equation}\label{odex2}
8z h''(z)+\left(8+\frac{z}{h(z)}\right)h'(z)+2k_2 h^2(z)+1=0.
\end{equation}
Even though we can not solve the equation (\ref{odex2}) exactly, now we are able to draw the solution surface by using Maple package. As an example, here we demonstrated the solution surface $u(x,t)$ of (\ref{251}) corresponding to Example 2, for two different values of $k_2.$
\begin{figure}[h]\label{fig1}
\includegraphics[width=0.5\textwidth]{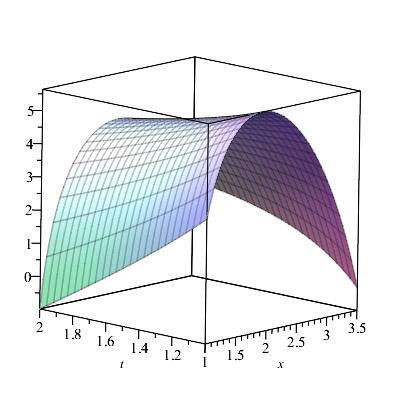}
\includegraphics[width=0.5\textwidth]{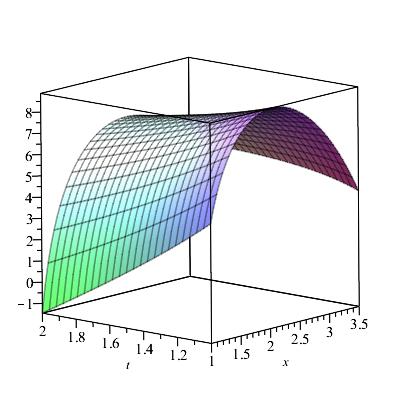}
\caption{The solution surface $u(x,t)$ given by (\ref{251}) corresponding to Example 2 with an initial condition $h(1)=2,$ $h'(1)=2.5$; left with parameter $k_2=1/4;$ right $k_2=1/6.$}
\end{figure}
\subsection*{Case b) of Theorem \ref{thm1}}
In the case b) the equation (\ref{fisher}) is rewritten as 
\begin{equation}\label{redod1}
u_t=k_3 e^{k_2 u}+k_5+\frac{1}{x}(k_1x e^{k_2 u}  u_x)_x,\qquad k_1, k_2, k_3, k_5\neq 0.
\end{equation}
Here we choose the symmetry $X=e^{-k_2 k_5 t}\frac{\partial}{\partial t}+k_5 e^{-k_2 k_5 t}\frac{\partial}{\partial u}.$
Then the characteristic equation corresponding to $X$ is
$$
\frac{d x}{0}=\frac{d t}{e^{-k_2 k_5 t}}=\frac{d u}{k_5 e^{-k_2 k_5 t}}.
$$
From here the similarity variable $z(x,t)=x$ and the transformation $u=k_5 t+h(z),$ here $h(z)$ is a holomorphic function.  Using these similarity variable and transformations to the equation (\ref{redod1}) we get the following theorem.
\begin{theorem}\label{thm3}
The invariant solution of (\ref{redod1}) corresponding to the $X$ is $u=k_5 t+\frac{1}{k_2}\ln(p(x))$, where $p(x)$ satisfies the similarity reduction equation $k_1xp''(x)+k_1p'(x)+k_2k_3xp(x)=0.$ The solution is $p(x)=c_1J\left(\sqrt{\frac{k_2k_3}{k_1}}x\right)+c_2Y\left(\sqrt{\frac{k_2k_3}{k_1}}x\right),$ with $J$ and $Y$ Bessel functions of order 0 of first and second kind respectively.
\end{theorem}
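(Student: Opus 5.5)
The plan is to impose the invariant-surface condition of the Definition for the symmetry $X=e^{-k_2k_5t}\frac{\partial}{\partial t}+k_5e^{-k_2k_5t}\frac{\partial}{\partial u}$ from case b) of Theorem \ref{thm1}. After that, I substitute the resulting ansatz into (\ref{redod1}) and linearize the reduced ODE with a logarithmic change of dependent variable.

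\emph{Step 1 (invariant form).} Since $\xi=0$, condition ii) of the Definition reads $e^{-k_2k_5t}(u_t-k_5)=0$, that is, $u_t=k_5$. Integrating in $t$ gives $u=k_5t+h(x)$ for an arbitrary function $h$. This agrees with the characteristic computation preceding the theorem, where the similarity variable is $z=x$.

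\emph{Step 2 (reduction).} Substitute into (\ref{redod1}), using $u_t=k_5$ and $e^{k_2u}=e^{k_2k_5t}e^{k_2h}$. The constant $k_5$ cancels on both sides. Every remaining term then carries the common factor $e^{k_2k_5t}$, which is nonzero and can be divided out. This leaves the $t$-independent equation
\[
k_3e^{k_2h}+\frac{1}{x}\bigl(k_1xe^{k_2h}h'\bigr)'=0 .
\]
The exact cancellation of the $t$-dependence is the point where the specific form of $f$ in case b) matters, namely the matching exponent $k_2$ in $f$ and $g$. It is consistent because $X$ is a symmetry. This step is the only place requiring care, and it is a direct check.

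\emph{Step 3 (linearization).} Put $h=\frac{1}{k_2}\ln p$, so that $e^{k_2h}=p$ and $e^{k_2h}h'=p'/k_2$. The equation becomes $k_3p+\frac{k_1}{k_2x}(xp')'=0$. Multiplying by $k_2x$ gives
\[
k_1xp''+k_1p'+k_2k_3xp=0 ,
\]
and so $u=k_5t+\frac{1}{k_2}\ln p(x)$.

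\emph{Step 4 (solution).} Multiply the linear equation by $x/k_1$ to obtain $x^2p''+xp'+\frac{k_2k_3}{k_1}x^2p=0$. Rescale by $s=\sqrt{k_2k_3/k_1}\,x$. The equation becomes Bessel's equation of order $0$, $s^2p_{ss}+sp_s+s^2p=0$, whose general solution is $c_1J(s)+c_2Y(s)$. This yields the stated formula for $p$.

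One caveat should be noted. When $k_2k_3/k_1<0$, the square root is imaginary and the same formula holds with complex argument; equivalently, the solutions are modified Bessel functions. In every case, the logarithm requires restricting to an $x$-interval where $p>0$.
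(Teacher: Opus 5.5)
Your proposal is correct and follows the paper's route. The invariant-surface (characteristic) condition for $X$ gives $u=k_5t+h(x)$, and substituting into (\ref{redod1}) cancels $k_5$ and leaves a common factor $e^{k_2k_5t}$, which can be divided out. The substitution $p=e^{k_2h}$ then linearizes the result to $k_1xp''+k_1p'+k_2k_3xp=0$, which is Bessel's equation of order $0$ after rescaling. You write out the computation the paper leaves implicit, and you add the caveats about the sign of $k_2k_3/k_1$ and the restriction $p>0$.
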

\subsection*{Case c) of Theorem \ref{thm1}}
In the case c) the equation (\ref{fisher}) is rewritten as 
\begin{equation}\label{od9}
u_t=k_5+\frac{1}{x}(k_1x e^{k_2 u} u_x)_x,\quad k_1, k_2, k_5\neq 0.
\end{equation}
According to Theorem \ref{thm3} the invariant solution corresponding to $X=e^{-k_2 k_5 t}\frac{\partial}{\partial t}+k_5 e^{-k_2 k_5 t}\frac{\partial}{\partial u}$ is $u=k_5 t+\ln(c_1+c_2\ln(x)).$ For the symmetry
$X=k_2 x\frac{\partial}{\partial x}+2\frac{\partial}{\partial u}$ the similarity transformation $u(x,t)=\ln(x^2h(t))/k_2$ can be found analogously to previous cases. So we can formulate the following theorem.
\begin{theorem}
The invariant solution of equation (\ref{od9}) corresponding to $X=k_2 x\frac{\partial}{\partial x}+2\frac{\partial}{\partial u}$ is $u(x,t)=\frac{1}{k_2}\ln(\frac{x^2}{p(t)}),$ where $p(t)$ satisfies the similarity reduction equation $p'(t)+k_2k_5p(t)+4k_1=0,$ the solution is $p(t)=-\frac{4k_1}{k_2 k_5}+c_1e^{-k_2k_5t}.$
\end{theorem}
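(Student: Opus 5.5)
Because Case c) has $f\equiv k_5$ constant, I will prove the statement by substituting the ansatz $u=\frac{1}{k_2}\ln\left(\frac{x^2}{p(t)}\right)$ directly into (\ref{od9}). The first step is to confirm that this form really is the invariant surface of $X=k_2x\frac{\partial}{\partial x}+2\frac{\partial}{\partial u}$. The characteristic system $\frac{dx}{k_2x}=\frac{du}{2}$ with $dt=0$ gives the invariants $t$ and $k_2u-2\ln x$. Hence every invariant solution satisfies $e^{k_2u}=x^2/p(t)$ for some function $p$. Equivalently, one checks the invariant surface condition $\xi u_x+\tau u_t-\eta=k_2x\cdot\frac{2}{k_2x}-2=0$ directly.

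Next I compute the pieces of (\ref{od9}). We have $u_x=\frac{2}{k_2x}$ and $e^{k_2u}=x^2/p$. This gives $x k_1 e^{k_2u}u_x=\frac{2k_1x^2}{k_2p}$. Differentiating in $x$ and dividing by $x$ yields the diffusion term $\frac{4k_1}{k_2p}$. The left-hand side is $u_t=-\frac{p'}{k_2p}$. Equation (\ref{od9}) therefore becomes
\[
-\frac{p'(t)}{k_2p(t)}=k_5+\frac{4k_1}{k_2p(t)}.
\]
Multiplying by $-k_2p$ gives $p'+k_2k_5p+4k_1=0$, which is the claimed reduction equation.

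Finally, this is a linear first-order ODE with constant coefficients. Its constant particular solution is $p=-\frac{4k_1}{k_2k_5}$, which exists because $k_2k_5\neq0$. The homogeneous solution is $c_1e^{-k_2k_5t}$, and their sum is the stated $p(t)$. The argument also runs in reverse: any $p$ solving the ODE makes $u$ satisfy (\ref{od9}) and the invariance condition, so the characterization is an equivalence.

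No step is a real obstacle. The only points needing care are sign conventions and the domain: we need $x^2/p(t)>0$ for the logarithm, or else we interpret it via a branch or absolute value.
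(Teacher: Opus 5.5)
Your proof is correct and follows essentially the same route as the paper. Like the paper, you read off the invariant form from the characteristic system of $X$ (the paper writes it as $u=\ln(x^2h(t))/k_2$, i.e.\ $h=1/p$), substitute it into (\ref{od9}) to obtain $p'+k_2k_5p+4k_1=0$, and solve this linear constant-coefficient ODE. Your computation of the diffusion term as $4k_1/(k_2p)$ and your remark that $p>0$ is needed for a real logarithm are both accurate.
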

\section{Conclusion}
In this work, we studied a generalized Fisher equation $u_t=f(u)+\frac{1}{x}(x g(u) u_x)_x$ in cylindrical coordinate by using Lie symmetry method. While doing the computation we found that there are specific types of function $f(u),$ $g(u)$ so that the generalized Fisher equation has Lie symmetry other than time translation symmetry. These obtained function types include earlier studied types of $f(u)$ and $g(u)$ of the generalized Fisher equation. In the view of the specific types of function $f(u)$ and $g(u)$ of generalized Fisher equation, we have enlarged the function types that had been studied before. Furthermore, we found the Lie symmetries corresponding to each type of function $f(u)$ and $g(u),$ then we followed the standard Lie symmetry analysis of partial differential equation.



\end{document}